\newcommand{\FF}{{\mathbb F}}
\newtheorem{lemma}{Lemma}
\newtheorem{theorem}{Theorem}
\newtheorem{corollary}{Corollary}
\theoremstyle{definition}
\newtheorem{remark}{Remark}
\begin{document}
%
% paper title
% Titles are generally capitalized except for words such as a, an, and, as,
% at, but, by, for, in, nor, of, on, or, the, to and up, which are usually
% not capitalized unless they are the first or last word of the title.
% Linebreaks \\ can be used within to get better formatting as desired.
% Do not put math or special symbols in the title.
\title{The existence of perfect codes in Doob graphs}
%
%
% author names and IEEE memberships
% note positions of commas and nonbreaking spaces ( ~ ) LaTeX will not break
% a structure at a ~ so this keeps an author's name from being broken across
% two lines.
% use \thanks{} to gain access to the first footnote area
% a separate \thanks must be used for each paragraph as LaTeX2e's \thanks
% was not built to handle multiple paragraphs

\author{Denis~S.~Krotov% https://orcid.org/0000-0002-8516-755X
%\IEEEmembership{, Member, IEEE}% <-this % stops a space
\thanks{D. S. Krotov is with the Sobolev Institute of Mathematics, Novosibirsk 630090 Russia e-mail: {\tt krotov@math.nsc.ru}}% <-this % stops a space
%\thanks{Manuscript received October 8, 2018%; %revised September 17, 2014
%.}%
\thanks{This work was funded by the Russian Science Foundation (18-11-00136).}%
\thanks{The results of this work were presented in part at the Sixteenth International Workshop on Algebraic and Combinatorial Coding Theory, Svetlogorsk, Russia, 2--8 September 2018.}%
\thanks{This is the accepted version of the paper published in the IEEE Transactions on Information Theory,  66:3 (2020), 1423--1427, \texttt{https://doi.org/10.1109/TIT.2019.2946612}  \copyright~2019 IEEE}%
}

% The paper headers
\markboth{The existence of perfect codes in Doob graphs}%
{The existence of perfect codes in Doob graphs}

% make the title area
\maketitle

% As a general rule, do not put math, special symbols or citations
% in the abstract or keywords.
\begin{abstract} \boldmath
We solve the problem of existence of perfect codes in the Doob graph. It is shown that $1$-perfect codes in the Doob graph $D(m,n)$ exist if and only if $6m+3n+1$ is a power of $2$; that is, if the size of a $1$-ball divides the number of vertices. 
\end{abstract}

% Note that keywords are not normally used for peerreview papers.
\begin{IEEEkeywords}
Perfect codes, Doob graphs, Eisenstein--Jacobi integers.
\end{IEEEkeywords}

\section{Introduction}
%\IEEEPARstart
{T}{he} codes in Doob graphs are special cases of codes over Eisenstein--Jacobi integers, see, e.g., \cite{Huber94,MSBG:2008}, which can be used for the information transmission in the channels with two-dimensional or complex-valued modulation. The vertices of a Doob graph can be considered as words in the mixed alphabet consisting of elements of the quotient (modulo $4$ and modulo $2$) rings of the ring of Eisenstein--Jacobi integers, see, e.g., \cite{Kro:perfect-doob}. In contrast to the cases considered in \cite{Huber94,MSBG:2008}, the number $4$ is not prime, and the quotient ring is not a field. This fact is not a problem from the point of view of the modern coding theory, which has a rich set of algebraic and combinatorial tools to deal with rings, see, e.g., \cite{SAS:codes&rings}; moreover, studying codes in Doob graphs is additionally motivated by the application of association schemes in coding theory \cite{Delsarte:1973}: 
the algebraic parameters of the schemes associated with these graphs are the same as for the quaternary Hamming scheme (this fact can be also treated from the point of view of the corresponding distance-regular graphs).

In this paper, we completely solve the problem of existence of perfect codes in the class of Doob graphs.
Namely,
we show the existence of $1$-perfect codes
in the Doob graph $D(m,n)$ for all $m$ and $n$ that satisfy the obvious necessary condition: the size $6m+3n+1$ of a ball of radius $1$ divides the number $4^{2m+n}$ of vertices. In the previous papers \cite{KoolMun:2000,Kro:perfect-doob,SHK:additive}, the problem was solved only for the cases when the parameters satisfy additional conditions admitting the existence of linear or additive perfect codes, or for small values of $m$.

The class of Doob graphs is a class of  distance-regular graphs of unbounded diameter, and the problem considered can be viewed in the general context of the problem of existence of perfect codes in distance-regular graphs. We mention some known results in this area,
mainly concentrating on distance-regular graphs important for coding theory. 
A connected graph is called distance-regular if there are constants $s_{ij}$ such that for every $i$, $j$ and for every vertex $x$, every vertex $y$ at distance $i$ from $x$ has exactly $s_{ij}$ neighbors at distance $j$ from $x$.
In the Hamming graphs $H(n,q)$, the problem of complete characterization of parameters of perfect codes is solved only for the case when $q$ is a prime power \cite{ZL:1973,Tiet:1973}: there are no nontrivial perfect codes except the $e$-perfect repetition codes in $H(2e+1,2)$, the $3$- and $2$-perfect Golay codes \cite{Golay:49} in $H(23,2)$ and $H(11,3)$, respectively, and the $1$-perfect codes in $H((q^k-1)/(q-1),q)$. In the case of a non-prime-power $q$, no nontrivial perfect codes are known, and the parameters for which
the nonexistence is not proven are restricted by $1$- and $2$-perfect codes (the last case is solved for some values of $q$), see \cite{Heden:2010:non-prime} for a survey of the known results in this
area. The problem of the (non)existence of perfect codes in the Johnson graphs $J(n,w)$ is known as  Delsarte's conjecture, see \cite{Etzion:2007:Johnson} and \cite{Gordon:2006} for the known nonexistence results; in general, the problem remains open. 
An interesting open problem is connected with the problem of existence of $1$-perfect codes in the doubled Johnson (doubled Odd) graph 
$J(2w+1,w,w+1)$ (the subgraph of $H(2w+1,2)$ induced by the words of weight $w$ and $w+1$): the existence of such codes is equivalent to the existence of Steiner systems $S(w,w+1,2w+2)$; in particular, the Steiner quadruple system $S(3,4,8)$ and the small Witt design $S(5,6,12)$
\cite{Carmichael:31,Witt:37} correspond to nontrivial $1$-perfect codes in $J(7,3,4)$ and $J(11,5,6)$;
 the nonexistence of Steiner systems 
$S(4,5,15)$ \cite{MendHung72} and $S(4,5,17)$ \cite{OstPot:S-4-5-17} 
implies the nonexistence of $1$-perfect codes in 
$J(19,9,10)$ and $J(23,11,12)$ 
(in general, the problem remains open, with the first open case in $J(31,15,16)$).
In the Grassmann graphs $J_q(n,w)$ and the bilinear forms graphs $B_q(m,n)$, nontrivial perfect codes do not exist \cite{Chihara87}, see also~\cite{MarZhu:1995}.
Some perfect codes in dual polar graphs 
are discussed in \cite[p.659]{Stanton1980},
including the examples of $1$-perfect codes 
found in \cite{Thas1977} 
in graphs of diameter $3$.  
Studying diameter-$3$ antipodal distance-regular graphs with $1$-perfect codes 
(usually, with some assumptions on the graph automorphisms)
is a separate topic, 
see \cite{GLP:1998,MakPadTsi:2013,MakPadTsi:2014,Tsiovkina:2015,MakPadTsi:2018,Tsiovkina:2015b,Tsiovkina:2017}.

The Doob graph $D(m, n)$ is the Cartesian product 
of $m$ copies of the Shrik\-hande graph
and $n$ copies of the complete graph of order $4$ 
(detailed definitions are given in the next
section). 
It is a distance-regular graph of diameter $2m + n$ 
with the same parameters (intersection array) as the
Hamming graph $H(2m + n, 4)$.
On the other hand, 
the vertices of the Doob graph can be naturally associated 
with the elements of the module 
$\mathrm{GR}(4^2)^m \times \FF_4^n$ over the Galois ring $\mathrm{GR}(4^2)$
or with the elements of the module 
${\mathbb Z}_4^{2m} \times {\mathbb Z}_2^{2n'} \times {\mathbb Z}_4^{n''}$ over ${\mathbb Z}_4$, where $n'+n''=n$.
In this way, 
the Doob graph is a Cayley graph on the corresponding module.
The submodules of the first module are called the linear codes in $D(m,n)$;
the submodules of ${\mathbb Z}_4^{2m} \times {\mathbb Z}_2^{2n'} \times {\mathbb Z}_4^{n''}$ 
are called the additive codes in $D(m,n)$.
The history of studying perfect codes in Doob graphs started from the paper
\cite{KoolMun:2000}, where it was shown that nontrivial $e$-perfect codes 
in $D(m, n)$ can only exist when $e = 1$ and $2m + n = (4^k- 1)/3$ 
for some integer $k$ and two $1$-perfect codes, in $D(2,1)$ and $D(1,3)$, 
were constructed.
In~\cite{Kro:perfect-doob}, infinite series of perfect codes in Doob graphs were obtained. In particular, it was shown that the necessary condition $2m + n = (4^k-1)/3$ is sufficient if $m < n-o(2m+n)$;
the class of linear perfect codes was completely characterized; a class of additive perfect codes was constructed and necessary conditions on $m$, $n'$, $n''$ for the existence of additive perfect codes in $D(m,n'+n'')$ were obtained (in a recent work \cite{SHK:additive}, it was shown that those conditions are also sufficient). 

%======================
%======================
%======================
\section{Definitions}
% The Doob graph $D(m,n)$ is the Cartesian product of $m$ copies of the Shrikhande graph and $n$ copies of the complete graph of order $4$.

The Shrikhande graph $\mathrm{Sh}$ can be naturally defined on the pairs of elements from ${\mathbb Z}_4$. Two such pairs $(x_1,x_2)$ and $(y_1,y_2)$ are adjacent if their difference 
$(x_1-y_1,x_2-y_2)$ is one of $(0,1)$, $(0,3)$, $(1,0)$, $(3,0)$, $(1,1)$, $(3,3)$ (so, $\mathrm{Sh}$ is a Cayley graph on ${\mathbb Z}_4^2$).

We will use two representations of the complete graph $K_4$.
In the first one, $K_4({\mathbb Z}_4)$, 
its vertices are the elements $0$, $1$, $2$, $3$ of ${\mathbb Z}_4$; 
in the second, $K_4(\FF_4)$, the elements $0$, $1$, $\xi$, $\xi^2$ 
of the finite field $\FF_4$ of order $4$.

If $m$ is even, then $D(m,n)$ will be considered 
as the Cartesian product of $m$ copies of $\mathrm{Sh}$ 
and $n$ copies of $K_4(\FF_4)$ 
(in particular, $D(0,n)$ is the Hamming graph $H(n,4)$).
If $m$ is odd, then  $D(m,n)$ will be considered 
as the Cartesian product 
of $m$ copies of $\mathrm{Sh}$, 
two copies of $K_4({\mathbb Z}_4)$ 
and $n-2$ copies of $K_4(\FF_4)$
(note that in the considered class of parameters, 
$6m+3n+1$ is a power of $4$,
so $n$ is odd and $n=1$ implies even $m$).
So, the vertex set is the set of words of length $2m+n$ 
from $({\mathbb Z}_4^{2})^m\times \FF_4^{n}$ or
$({\mathbb Z}_4^{2})^m\times {\mathbb Z}_4^2\times \FF_4^{n-2}$,
and two vertices are adjacent if their coordinatewise difference has exactly one non-zero position $i$, $i>2m$,  or exactly one non-zero position $i$, $i\le 2m$, with value $1$ or $3$,
or exactly two nonzero positions
$2i-1$, $2i$, where $i\in \{1,\ldots,m\}$,
with values $1,1$ or $3,3$.

The distance between 
two vertices $\bar x$ and $\bar y$ of $D(m,n)$ (as well as in any other connected graph) 
is defined as the number of edges in the shortest path connecting $\bar x$ and $\bar y$.
Equivalently, the distance is equal to the sum of distances between 
the corresponding components of $\bar x$ and $\bar y$: 
$m$ Shrikhande components and $n$ $K_4$-components.
The distance form some vertex $\bar x$ of $D(m,n)$ 
to the all-zero vertex of $D(m,n)$ is referred to as the weight
of $\bar x$.

In any graph, an $e$-perfect code is defined as a set of vertices such that every ball of radius $e$ contains exactly one code vertex. We define a $1$-perfect Hamming code $\mathcal H$ in $H(n,4)$, $n=(4^k-1)/3$, by the check matrix consisting of all columns of height $k$ whose first nonzero element is $1$. To be explicit, we require the columns to be inverse-lexicographically ordered, for example ($k=3$),
$$\!\!\!\left[\!\!
\begin{array}{c@{\ }c@{\ }c@{\ }c@{\ }c@{\ }c@{\ }c@{\ }c@{\ }c@{\ }c@{\ }c@{\ }c@{\ }c@{\ }c@{\ }c@{\ }c@{\ }c@{\ }c@{\ }c@{\ }c@{\ }c}
1&1&1&1&1&1&1&1&1&1&1&1&1&1&1&1&0&0&0&0&0  \\
\xi^2&\xi^2&\xi^2&\xi^2&
\xi&\xi&\xi&\xi&
1&1&1&1&0&0&0&0 &1&1&1&1&0 \\
\xi^2&\xi&1&0&
\xi^2&\xi&1&0&
\xi^2&\xi&1&0&
\xi^2&\xi&1&0&
\xi^2&\xi&1&0&1
\end{array}\!\!\right].
$$

%======================
%======================
%======================
\section{Construction}
The approach of the construction for 
$1$-perfect codes in $D(m,n)$ is partially similar to that of \cite{KoolMun:2000} for tight $2$-designs (the codes formally dual to $1$-perfect). We start with the Hamming code $\mathcal H$ over $\FF_4$ in $H(2m+n,4)$ and replace subwords of length $4$  corresponding to the positions 
$4i-3$, $4i-2$, $4i-1$, $4i$ of the codewords by 
subwords of length $4$ over ${\mathbb Z}_4$, treated 
as elements of $D(2,0)$ if $i\le [m/2]$
or $D(1,2)$ if $i=(m+1)/2$.

In details, there are some differences with the construction in \cite{KoolMun:2000}.
For the code dual to $\mathcal H$,
there are only $16$ possibilities for
subwords in the considered quadruples of coordinates, and the substitution function used in \cite{KoolMun:2000} is an isometry  from the corresponding subcode in $H(4,4)$ into 
$D(2,0)$ ($D(1,2)$). 
In our case, all $256$ possible length-$4$ words occur as subwords, and there is no such isometry (indeed, the graphs $H(4,4)$,
  $D(1,2)$, $D(2,0)$ are not isomorphic). However, for the resulting code being $1$-perfect, we need not control the distance between any two codewords;
it is sufficient only to ensure that this distance cannot be $1$ or $2$. 
To do this, we construct the substitution bijection between $H(4,4)$
and $D(2,0)$ ($D(1,2)$) using the principles of the generalized concatenated construction \cite{Zin1976:GCC}.
It occurs that the resulting construction is close to a variant of the generalized concatenated construction for $1$-perfect codes in $H(n,q)$ presented in
\cite{Romanov:concat}.

%======================
%======================
\subsection{Codes in $D(1,2)$, $D(2,0)$ and $H(4,4)$.}

To construct a substitution function with the desired properties, in each of the graphs $D(1,2)$, $D(2,0)$, $H(4,4)$, we need two additive codes, of distance $3$ and $2$ and cardinality $16$ and $64$, respectively.

\begin{lemma}\label{lemma:d4}
Denote
\begin{IEEEeqnarray*}{rCl}
 \bar x = (0,1,2,3),\ 
\bar y = (1,0,1,2) &\in& {\mathbb Z}_4^4;
\\
\bar z = (0,0,1,1) &\in& {\mathbb Z}_4^4;
\\
\bar u = (0,0,0,2),\ \bar v = (0,0,2,0) &\in& {\mathbb Z}_4^4;
\\
\bar x'=(1,1,1,1),\ \bar y'=(0,1,\xi,\xi^2) &\in& \FF_4^4;
\\
\bar z'= (0,0,1,1) &\in& \FF_4^4.
\end{IEEEeqnarray*}
Define \vspace{-2ex}
\begin{IEEEeqnarray*}{rClrCl}
C''&=&
\langle \bar x, \bar y \rangle,
\quad &
C'&=&\langle \bar x, \bar y, \bar z \rangle;
\\
D'' &=&
\langle \bar x, \bar y \rangle,
\quad&
D'&=&\langle \bar x, \bar y, \bar u, \bar v \rangle;
\\
E''&=&
\langle \bar x', \bar y' \rangle,
\quad &
E'&=&
\langle \bar x', \bar y', \bar z' \rangle.
\end{IEEEeqnarray*}
We state that
\begin{itemize}
    \item[(a)] $C''\subset C'$,  
$D''\subset D'$,  
$E''\subset E'$;
    \item[(b)] $C'$, $D'$, $E'$
are distance-$2$ codes 
of cardinality $64$ in 
$D(1,2)$, $D(2,0)$, $H(4,4)$,
respectively;
\item[(c)] $C''$, $D''$, $E''$
are distance-$3$ codes 
of cardinality $16$ in 
$D(1,2)$, $D(2,0)$, $H(4,4)$,
respectively.
\end{itemize}
\end{lemma}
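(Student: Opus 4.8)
The plan is to exploit that each of the six sets is, by definition, an additive code---a submodule of $\mathbb{Z}_4^4$ or $\FF_4^4$---so that only two things need checking: the cardinalities, and the minimum weights, where the \emph{weight} of a vector means its distance to $\bar 0$ in the ambient graph. Claim~(a) is immediate, since the generator list of each double-primed code is a sublist of that of the corresponding primed code.

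First I would nail down the cardinalities by verifying that the listed generators are independent. Writing a typical element of $\langle\bar x,\bar y\rangle$ as $a\bar x+b\bar y=(b,\,a,\,2a+b,\,3a+2b)$ with $a,b\in\mathbb{Z}_4$, one sees it vanishes only for $a=b=0$; hence $C''=D''$ is free of rank~$2$, so $|C''|=|D''|=16$, and the same computation for $\langle\bar x',\bar y'\rangle$, whose typical element is $(a,\,a+b,\,a+b\xi,\,a+b\xi^2)$, gives $|E''|=16$. Adjoining $\bar z$ (resp.\ $\bar z'$) contributes an independent generator of order~$4$, so $|C'|=|E'|=4^3=64$; for $D'$ one uses that $\bar u,\bar v$ have order~$2$, writes a typical element as $(b,\,a,\,2a+b+2d,\,3a+2b+2c)$ with $a,b\in\mathbb{Z}_4$ and $c,d\in\{0,1\}$, and checks this representation is unique, so $|D'|=4\cdot4\cdot2\cdot2=64$.

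The substance is the weight estimate, which I would run through the three ambient graphs in parallel, the codeword formulas being nearly identical and only the weight functions differing. Recall that in $H(4,4)$ the weight is the Hamming weight; in $D(1,2)=\mathrm{Sh}\times K_4(\mathbb{Z}_4)\times K_4(\mathbb{Z}_4)$ the weight of $(w_1,w_2,w_3,w_4)$ is $w_{\mathrm{Sh}}(w_1,w_2)+[w_3\ne0]+[w_4\ne0]$; and in $D(2,0)=\mathrm{Sh}\times\mathrm{Sh}$ it is $w_{\mathrm{Sh}}(w_1,w_2)+w_{\mathrm{Sh}}(w_3,w_4)$, where $w_{\mathrm{Sh}}(p)$ equals $0$ if $p=(0,0)$, equals $1$ if $p$ lies in the Shrikhande connection set $S=\{(0,1),(0,3),(1,0),(3,0),(1,1),(3,3)\}$, and equals $2$ otherwise (the Shrikhande graph having diameter~$2$). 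For the double-primed codes the aim is to show every nonzero codeword has weight $\ge3$. I would first record two elementary facts over $\mathbb{Z}_4$: the pair $(2a+b,\,3a+2b)$ is $(0,0)$ only for $a=b=0$; and for $(a,b)\ne(0,0)$ at most one of $2a+b$, $3a+2b$ vanishes (from $2a+b=0$ comes $b=2a$, whence $3a+2b=3a\ne0$, and similarly the other way). Then split on the Shrikhande coordinates $(b,a)$: the case $a=b=0$ gives the zero codeword for $C''$, $D''$; if $w_{\mathrm{Sh}}(b,a)=2$ the recorded facts already force the weight to be $\ge3$; and if $w_{\mathrm{Sh}}(b,a)=1$ then $(b,a)$ runs over the six elements of $S$, and a one-line check for each shows $2a+b$ and $3a+2b$ are both nonzero and $(2a+b,\,3a+2b)\notin\{(0,0)\}\cup S$, which yields weight $\ge3$ in $D(1,2)$ and in $D(2,0)$ alike. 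The code $E''$ is handled the same way, using instead that $b,\,b\xi,\,b\xi^2$ exhaust $\FF_4^{*}$ for $b\ne0$, so at most one of $a+b,\,a+b\xi,\,a+b\xi^2$ is zero. The codeword $\bar x$ (weight~$3$ in both Doob graphs) and $\bar y'\in E''$ show $3$ is attained. For the primed codes one only needs to exclude weight-$1$ codewords; the same split works, the subcase $a=b=0$ now producing the codewords of $\langle\bar z\rangle$, $\langle\bar z'\rangle$, $\langle\bar u,\bar v\rangle$, all of weight~$2$ (e.g.\ $w_{\mathrm{Sh}}(0,0)+[1\ne0]+[1\ne0]=2$, and $(2,0),(0,2),(2,2)\notin S$), and the subcase $(b,a)\in S$ excluded by a short computation in each of the six cases (subtracting the two last coordinates for $C'$, forcing $a+b=0$ with $(3a,a)\notin S$; a parity argument for $D'$, since there at least one of $2a+b$, $3a+2b$ is odd and so cannot be cancelled by $2c$ or $2d$; an analogous $\xi\ne\xi^2$ argument for $E'$). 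Finally $\bar z$, $\bar z'$, $\bar u$ are weight-$2$ codewords, so the minimum distances are exactly~$2$.

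The main obstacle is conceptual rather than computational: the Doob-graph weight of a vector is \emph{not} a function of its support, since the Shrikhande block contributes $1$ or $2$ according to whether its value lies in $S$. A Hamming-type support argument is therefore unavailable, and one must keep $S$ in play and exploit the interaction between the Shrikhande coordinates $(b,a)$ and the $\mathbb{Z}_4$-linear coordinates $2a+b$, $3a+2b$; once this bookkeeping is set up, the finite verification (at most $64$ codewords, effectively the six cases $(b,a)\in S$) is routine.
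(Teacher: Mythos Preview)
Your argument is correct. The difference from the paper is one of method, not substance.

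For part~(b), the paper exhibits explicit parity-check vectors: $C'$ is orthogonal to $(1,1,1,3)$, $D'$ to $(0,2,0,2)$ and $(2,0,2,0)$, and $E'$ to $(1,1,1,1)$; a one-line check then shows that no weight-$1$ word in the respective graph can satisfy these constraints. For part~(c), the paper simply lists all sixteen codewords of $C''=D''$ and reads off the minimum weight, handling $E''$ by the same ``at most one of $a+b,\,a+b\xi,\,a+b\xi^2$ vanishes'' observation you use. Your route instead parameterizes a generic codeword as $(b,a,2a+b,3a+2b)$ (plus the $\bar z/\bar u,\bar v/\bar z'$ contribution for the primed codes) and case-splits on the Shrikhande weight of $(b,a)$, treating (b) and (c) uniformly. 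The paper's dual-code argument is shorter once the check vectors are written down, but requires guessing them; your parameterized split is more self-contained and makes the interaction between the Shrikhande block and the remaining coordinates explicit, at the cost of a handful of small case checks. One expository point: the ``$(b,a)\in S$'' language does not literally apply to $E'$ since there is no Shrikhande block over $\FF_4$; there you really mean ``exactly one of the first two coordinates $a,\,a+b$ is nonzero,'' and your $\xi\ne\xi^2$ subtraction then works. With that clarification the two proofs are equally valid.
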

\begin{proof}
We note that since the considered codes are closed under addition,
the code distance coincides with the minimum nonzero weight of a codeword.

(a) is trivial.

(b). Every codeword of $C'$ is orthogonal to $(1,1,1,3)$, 
as this is true for $\bar x$, $\bar y$, and $\bar z$.
It is easy to check that each of the $12$ words 
$(0,0,0,1)$, $(0,0,0,2)$, $(0,0,0,3)$, 
$(0,0,1,0)$, $(0,0,2,0)$, $(0,0,3,0)$, 
$(0,1,0,0)$, $(0,3,0,0)$, 
$(1,0,0,0)$, $(3,0,0,0)$, 
$(1,1,0,0)$, $(3,3,0,0)$
of weight $1$ in $D(1,2)$ is not orthogonal to $(1,1,1,3)$.
Hence, $C'$ does not contain weight-$1$ words and has code distance larger than $1$.
The cardinality of $C'$ is $4\cdot 4\cdot 4$, as $\bar x$, $\bar y$, $\bar z$ are linearly independent.
 
Each  codeword of $D'$ is orthogonal to both $(0,2,0,2)$ and $(2,0,2,0)$,
while this is not true for each 
of the $12$ weight-$1$ words 
$(0,0,0,1)$, $(0,0,0,3)$, 
$(0,0,1,0)$, $(0,0,3,0)$, 
$(0,0,1,1)$, $(0,0,3,3)$,
$(0,1,0,0)$, $(0,3,0,0)$, 
$(1,0,0,0)$, $(3,0,0,0)$, 
$(1,1,0,0)$, $(3,3,0,0)$
in $D(2,0)$.
Hence, $D'$ does not contain weight-$1$ words 
and has code distance larger than $1$.
Since $D'$ is spanned by independent elements 
of order $4$, $4$, $2$, and $2$, 
its cardinality is $4\cdot 4\cdot 2\cdot 2$.

Similar arguments work for $E'$, orthogonal to $(1,1,1,1)$.

(c). Each of $C''$, $D''$, $E''$ is the span of two linearly independent words of order $4$, so the cardinality is $16$ in each case.
Next, it is easy to see that each of the $15$ nontrivial linear combinations of $\bar x'$ and $\bar y'$ has at most one zero symbol; 
so, the minimum weight (and hence the code distance) of $E''$ is $3$.
For the codes $C''$ and $D''$,
the minimum weight (in $D(1,2)$ and $D(2,0)$, respectively)
can be easily found 
from the complete list of codewords:
% $ C''=D''= $ % for two-columns mode
$$ 
\begin{array}{r@{\ }c@{\ }c@{\ }l}
C''=D''= % remove in two-columns mode
\{(0,0,0,0),&(0,1,2,3),&(0,2,0,2),& (0,3,2,1),\\
 (1,0,1,2),& (1,1,3,1),& (1,2,1,0),& (1,3,3,3),\\
(2,0,2,0),& (2,1,0,3),& (2,2,2,2),&  (2,3,0,1),\\
 (3,0,3,2),& (3,1,1,1),& (3,2,3,0),& (3,3,1,3)\}.
\end{array}
$$

\end{proof}

\begin{lemma}\label{l:e}
Let $\bar c=(c_1,\ldots,c_n)$ be a codeword of the Hamming code $\mathcal H$, and let $\bar e = (e_1,e_2,e_3,e_4)$ be a codeword of the code $E''$ defined in Lemma~\ref{lemma:d4}. Then for every $j$, $0\le j< (n-1)/4$, the word
$\bar b=(b_1,\ldots,b_n)$ whose components  are
$$ b_i=
\left\{ 
\begin{array}{ll}
c_i+e_{i-4j} 
        & \mbox{if}\ i\in\{4j+1,4j+2,4j+3,4j+4\}, \\
        c_i & \mbox{otherwise}
\end{array}
\right.
$$
is also a codeword of $\mathcal H$.
\end{lemma}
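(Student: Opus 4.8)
The plan is to show that $\bar b$ lies in $\mathcal H$ by verifying that it is orthogonal to every row of the check matrix, i.e., that $H\bar b^\top = 0$, where $H$ is the check matrix defining $\mathcal H$ in $H(n,4)$ with $n = (4^k-1)/3$. Since $\bar c \in \mathcal H$, we have $H\bar c^\top = 0$, so by linearity it suffices to prove that $H\bar\delta^\top = 0$, where $\bar\delta = \bar b - \bar c$ is the word supported on positions $\{4j+1,\dots,4j+4\}$ with values $(e_1,e_2,e_3,e_4)$ there. In other words, I must show that the four columns of $H$ in positions $4j+1,\dots,4j+4$, call them $\bar h_1,\bar h_2,\bar h_3,\bar h_4$, satisfy $e_1\bar h_1 + e_2\bar h_2 + e_3\bar h_3 + e_4\bar h_4 = 0$ for every $\bar e = (e_1,e_2,e_3,e_4) \in E''$.

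The key structural observation is the explicit form of $H$ described in the excerpt: its columns are exactly the height-$k$ vectors over $\FF_4$ whose first nonzero entry is $1$, listed in inverse-lexicographic order. Because of this ordering, any four consecutive columns starting at a position $4j+1$ with $j < (n-1)/4$ differ only in their last coordinate, which runs through $\xi^2, \xi, 1, 0$ (this is visible in the displayed $k=3$ example, and is the general pattern since the last coordinate cycles fastest). So I can write $\bar h_i = \bar g + f_i\bar\varepsilon$, where $\bar g$ is the common "top part" (a fixed height-$(k-1)$ vector prepended appropriately, here with last coordinate $0$), $\bar\varepsilon$ is the unit vector in the last coordinate, and $(f_1,f_2,f_3,f_4) = (\xi^2,\xi,1,0)$. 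Then $\sum_i e_i\bar h_i = \big(\sum_i e_i\big)\bar g + \big(\sum_i e_i f_i\big)\bar\varepsilon$, so the claim reduces to the two scalar identities $\sum_i e_i = 0$ and $e_1\xi^2 + e_2\xi + e_3 = 0$ for every $\bar e \in E''$.

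These two identities are precisely orthogonality of $E''$ to the vectors $(1,1,1,1)$ and $(\xi^2,\xi,1,0)$. By Lemma~\ref{lemma:d4}(b), $E' = \langle \bar x',\bar y',\bar z'\rangle$ is orthogonal to $(1,1,1,1)$, and since $E'' \subset E'$ by part (a), the first identity holds for all of $E''$. For the second, it suffices to check it on the generators $\bar x' = (1,1,1,1)$ and $\bar y' = (0,1,\xi,\xi^2)$: we get $\xi^2 + \xi + 1 = 0$ and $0\cdot\xi^2 + 1\cdot\xi + \xi = 2\xi = 0$ in $\FF_4$, both of which hold; hence the second identity holds on all of $E'' = \langle \bar x',\bar y'\rangle$. (Equivalently, note $(\xi^2,\xi,1,0)$ is, up to the scalar we fixed, the first column block of $H$, so $E''$ being contained in the dual code $\mathcal H$ restricted to a quadruple already encodes this; but the direct check on generators is cleaner.)

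The only real subtlety — and the step I would treat most carefully — is justifying the claim about the column structure of $H$: that for every $j$ with $0 \le j < (n-1)/4$, the columns $4j+1,\dots,4j+4$ agree in all coordinates except the last, where they read $\xi^2,\xi,1,0$. This follows from the inverse-lexicographic ordering together with the rule that a column is admitted iff its first nonzero entry is $1$: as the last coordinate cycles through $\xi^2,\xi,1,0$, admissibility is governed by the entries above it (which are held fixed within such a block), so either all four such columns are admitted or none are — and a block of four is admitted exactly when the fixed top part is not identically zero, which is the condition $4j+4 \le n - 1$, i.e. $j < (n-1)/4$. I would state this as a short observation, possibly with a one-line reference back to the displayed example, and then the rest of the proof is the two-line reduction above. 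Nothing else is needed; the argument does not touch the Shrikhande coordinates at all, since at this stage $\mathcal H$ is still the ordinary $\FF_4$-Hamming code.
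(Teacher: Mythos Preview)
Your proof is correct and follows essentially the same route as the paper: reduce by linearity to the syndrome of the word supported on positions $4j+1,\ldots,4j+4$, observe from the inverse-lexicographic ordering that the corresponding submatrix of the check matrix has last row $(\xi^2,\xi,1,0)$ and all other rows constant (i.e., multiples of $(1,1,1,1)$), and verify that the generators $\bar x',\bar y'$ of $E''$ are orthogonal to both. Your justification of the column-block structure is in fact more explicit than the paper's, which simply appeals to ``the construction of $P$''.
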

\begin{proof}
It is sufficient to prove the statement for the case when $\bar c$ is the all-zero word.

%From the definition of the code $E''$ in Lemma~\ref{lemma:d4}, we see that it can be alternatively defined by the check matrix 
%$$
%\begin{array}{cccc}
%   1&1&1&1  \\
%\xi^2&\xi&1&0
%\end{array}
%$$
%(indeed, both basic words $\bar x'$ and $\bar y'$ are in the kernel of this matrix). 

For the all-zero $\bar c$,
the word $\bar b$ has the form
$(0,\ldots,0,e_1,e_2,e_3,e_4,0,\ldots,0)$,
and its syndrome $P \bar b$ coincides 
with $P_{(4j+1,4j+2,4j+3,4j+4)} \bar e$, where the matrix $P_{(4j+1,4j+2,4j+3,4j+4)}$ is composed from the four corresponding columns of $P$. By the construction of $P$ (recall, it consists of all different columns  whose  first  nonzero  element is $1$ placed in the inverse lexicographical order), the considered submatrix has the last row $(\xi^2,\xi,1,0)$, while the other rows are multiples of $(1,1,1,1)$.
From the definition of the code $E''$ in Lemma~\ref{lemma:d4}, we see that its 
codewords are orthogonal to both $(\xi^2,\xi,1,0)$ as $(1,1,1,1)$ (indeed, this is true for the base codewords $\bar x'$ and $\bar y'$). It follows that 
$P_{(4j+1,4j+2,4j+3,4j+4)} \bar e = \bar 0$ and, hence, $P \bar b=\bar 0$. That is, $\bar b$ belongs to $\mathcal H$.
\end{proof}
\begin{lemma}\label{l:cde}
For every two cosets $C''_1$, $C''_2$ 
% ($D''_1$, $D''_2$ or $E''_1$, $E''_2$) 
of $C''$ that are not subsets of the same coset of $C'$, for every $\bar x$ from $C''_1$, there is $\bar y$ from $C''_2$ at distance $1$ from $\bar x$. The same holds for the cosets of $D''$
that are not in one coset of $D'$
and for the cosets of $E''$ 
that are not in one coset of $E'$.
\end{lemma}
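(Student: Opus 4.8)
The plan is to recast the statement in terms of cosets and then settle it by counting. Given cosets $C''_1,C''_2$ of $C''$ that do not lie in one coset of $C'$, pick $\bar x\in C''_1$ and write $C''_2=\bar x+\bar d+C''$; the hypothesis means exactly that $\bar d\notin C'$. A vertex $\bar y\in C''_2$ with $d(\bar x,\bar y)=1$ is the same thing as a vector $\bar w$ of weight $1$ (one generating an edge of $D(1,2)$) with $\bar w\in\bar d+C''$. Since this condition does not involve $\bar x$, and since every coset of $C''$ not contained in $C'$ has the form $\bar d+C''$ with $\bar d\notin C'$, the whole assertion is equivalent to the statement that \emph{every coset of $C''$ not contained in $C'$ contains a weight-$1$ vector} --- and likewise with $(D'',D')$ and $(E'',E')$ in place of $(C'',C')$.

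Next I would count on both sides. Each of $D(1,2)$, $D(2,0)$, $H(4,4)$ is regular of valency $12$ (a Shrikhande factor contributes $6$, a $K_4$ factor contributes $3$), so each has exactly $12$ vectors of weight $1$. On the other hand the ambient module has $4^4=256$ elements, $|C''|=16$ and $|C'|=64$, so $C''$ has $16$ cosets, of which exactly $|C'|/|C''|=4$ are contained in $C'$; hence exactly $12$ cosets of $C''$ are not contained in $C'$, and the same counts hold for $D''\subset D'$ and $E''\subset E'$. So it is enough to show that the $12$ weight-$1$ vectors represent $12$ \emph{pairwise distinct} cosets of $C''$, none of which lies inside $C'$; the counting then forces these to be precisely the $12$ cosets not contained in $C'$, and the lemma follows.

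Both of these facts come directly from the minimum distances established in Lemma~\ref{lemma:d4}. If $\bar w_1\neq\bar w_2$ are weight-$1$ vectors then, because these graphs are Cayley graphs, $d(\bar w_1-\bar w_2,\bar 0)=d(\bar w_1,\bar w_2)\le d(\bar w_1,\bar 0)+d(\bar 0,\bar w_2)=2$, so the nonzero vector $\bar w_1-\bar w_2$ cannot belong to the distance-$3$ code $C''$; hence $\bar w_1$ and $\bar w_2$ lie in different cosets of $C''$. And a weight-$1$ vector $\bar w$ could lie in a coset of $C''$ that is contained in $C'$ only if $\bar w\in C'$, which is impossible since $C'$ has minimum distance $2$. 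The same two lines, with $(D'',D')$ and then $(E'',E')$, dispose of the remaining two cases. I do not expect a genuine obstacle; the one point needing care is that in $D(1,2)$ and $D(2,0)$ a weight-$1$ vector may have two nonzero ${\mathbb Z}_4$-coordinates (namely a Shrikhande edge), so one must argue throughout with the graph distance and the triangle inequality, as above, rather than with the number of nonzero coordinates.
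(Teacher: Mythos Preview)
Your proof is correct and is essentially the paper's own argument: both use the same counting (12 neighbors, 16 cosets of $C''$, 4 of them inside one coset of $C'$) together with the minimum distances of $C''$ and $C'$ from Lemma~\ref{lemma:d4}. The only difference is presentational---you translate everything to the origin and speak of weight-$1$ vectors in cosets of $C''$, while the paper works directly with the $12$ neighbors of $\bar x$---but the ideas and steps are identical.
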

\begin{proof}
The statement is proven by the following counting argument. The word $\bar x$ has exactly $12$ neighbors. Two neighbors cannot belong to the same coset of $C''$, because $C''$ is distance-$3$. No one of these $12$ neighbors belongs to the same coset of $C'$ as $\bar x$, because $C'$ is distance-$2$. Since there are $16$ cosets of $C''$ and $4$ of them are subsets of the same coset of $C'$ containing $\bar x$, each of the remaining $12$ cosets contains exactly one neighbor of $\bar x$.
\end{proof}

%======================
%======================

\subsection{Main theorem}\label{ss:main}

For the construction,
we need two maps, $\phi$ and $\psi$.
The bijective map $\phi$ between the vertex sets 
of $H(4,4)$ and $D(2,0)$
is required to satisfy
the following conditions:
\begin{enumerate}
\item[\rm (a)] $\bar a$ and $\bar b$ 
belong to the same coset of $E''$ if and only if $\phi(\bar a)$ and $\phi(\bar b)$ belong to the same coset of~$D''$;
\item[\rm (b)] $\bar a$ and $\bar b$ belong to the same coset of $E'$ if and only if $\phi(\bar a)$ and $\phi(\bar b)$
belong to the same coset of~$D'$.
\end{enumerate}
To construct $\phi$, we represent each of the $256$
vertices of $H(4,4)$ as 
$\bar e'_i+\bar e''_j +\bar e'''_k$, 
$i,j\in\{0,1,2,3\}$, $k\in\{0,\ldots,15\}$, where
\begin{itemize}
 \item $\bar e'_{0}$, $\bar e'_{1}$, $\bar e'_{2}$, $\bar e'_{3}$ are representatives of the four cosets of~$E'$;
 \item  $\bar e''_{0}$, $\bar e''_{1}$, $\bar e''_{2}$, $\bar e''_{3}$ are representatives of the four cosets of~$E''$ in~$E'$;
 \item $E''=\{\bar e'''_0$, \ldots, $\bar e'''_{15}\}$.
\end{itemize}
Now, the bijection $\phi$ is defined by
$$ \phi(\bar e'_i+\bar e''_j +\bar e'''_k)=
\bar d'_i+\bar d''_j +\bar d'''_k$$
for every $i$ from $\{0,1,2,3\}$,
every $j$ from $\{0,1,2,3\}$,
and every $k$ from $\{0,\ldots,15\}$. 
In a similar manner, each of the $256$
vertices of $D(2,0)$ is represented as 
$\bar d'_i+\bar d''_j +\bar d'''_k$, 
$i,j\in\{0,1,2,3\}$, $k\in\{0,\ldots,15\}$, where
\begin{itemize}
 \item $\bar d'_{0}$, $\bar d'_{1}$, $\bar d'_{2}$, $\bar d'_{3}$ are representatives of the four cosets of~$D'$;
 \item  $\bar d''_{0}$, $\bar d''_{1}$, $\bar d''_{2}$, $\bar d''_{3}$ are representatives of the four cosets of~$D''$ in~$D'$;
 \item $D''=\{\bar d'''_0$, \ldots, $\bar d'''_{15}\}$.
\end{itemize}

The bijective map $\psi$ between the vertex sets 
of $H(4,4)$ and $D(1,2)$ is constructed similarly,
involving the cosets of $C'$ and $C''$,
and satisfies the following conditions:
\begin{enumerate}
    \item[\rm (c)]$\bar a$ and $\bar b$ 
belong to the same coset of~$E''$
if and only if 
$\psi(\bar a)$ and $\psi(\bar b)$
belong to the same coset~of~$C''$;
    \item[\rm (d)] $\bar a$ and $\bar b$ 
belong to the same coset of~$E'$
if and only if 
$\psi(\bar a)$ and $\psi(\bar b)$
belong to the same coset of~$C'$.
\end{enumerate}

\begin{theorem}\label{th:main}
Let $\mathcal H$ be the Hamming code in $H((4^k-1)/3,4)$ whose check matrix consists 
of all columns with first nonzero element $1$, in the inverse lexicographical order.
Let the codes $E''$, $E'$ 
in $H(4,4)$,
the codes $C''$, $C'$ 
in $D(1,2)$,
the codes $D''$, $D'$ 
in $D(2,0)$ be defined as in Lemma~\ref{lemma:d4}.
Let $\phi$ and $\psi$ be bijective maps 
from the vertex set of $H(4,4)$ to the vertex sets
of $D(2,0)$ and $D(1,2)$, respectively, satisfying
conditions (a), (b) and (c), (d) above.
Let $m$ and $n$ be positive integers 
such that $2m+n=(4^k-1)/3$.
If $m$ is even, then 
\begin{multline*}
\mathcal C = \Big\{ \big(
\vphantom{|^2}
\phi(x_1,...,x_4), \ldots,
\phi(x_{2m-3},...,x_{2m}),\\
\ \ \ \ \ \ \ \ \ \ \ \ \ \ \ \ \ \ \ \ 
x_{2m+1}, \ldots,x_{2m+n} \big):\ \\
(x_{1}, \ldots,x_{2m+n})
\in \mathcal{H}\Big\} 
\end{multline*}
is a $1$-perfect code in $D(m,n)$.
If $m$ is odd, then 
\begin{multline*}
\mathcal C = \Big\{ \big(\phi(x_1,...,x_4), \ldots,
\phi(x_{2m{-}5},...,x_{2m{-}2}), \\
\ \ \ \ 
\psi(x_{2m{-}1},...,x_{2m{+}2}), 
x_{2m+1}, \ldots,x_{2m+n} \big):\ \\
(x_{1}, \ldots,x_{2m+n})
\in \mathcal{H}\Big\} 
\end{multline*}
is a $1$-perfect code in $D(m,n)$.
\end{theorem}

\begin{proof}
We will consider the case when $m$ is even; the odd case is similar.
Assume the receiver get a word 
$\bar y =(y_1,\ldots,y_{2m+n}) \in {\mathbb Z}_4^{2m} \times \FF_4^n $, associated
with a vertex of $D(m,n)$.
To decode the message under the assumption that 
an error of weight at most $1$ occurred,
one should find a codeword $\bar c$ 
at distance at most $1$ from $\bar y$.
Consider 
\begin{IEEEeqnarray*}{rl}
\bar x = \big(\phi^{-1} (y_1,y_2), \ldots, \phi^{-1} 
(y_{2m-1},y_{2m}),  y_{2m+1}, \ldots, y_{2m+n}
 \big) \\ \in {\FF_4}^{2m+n}. 
\end{IEEEeqnarray*}
If $\bar x$ is a codeword of $\mathcal H$, then, by the definition of $\mathcal C$, we have $\bar c=\bar y\in \mathcal C$.
Assume that $\bar x\not\in\mathcal H$. Since $\mathcal H$ is a $1$-perfect code, there is $\bar b=(b_1,\ldots,b_{2m+n})\in\mathcal H$ at distance $1$ from $\bar x$.
We consider the codeword $\bar z\in \mathcal C$ defined as 
\begin{IEEEeqnarray*}{l}
\bar z = (z_1,\ldots,z_{2m+n}) 
\\
= \big(\phi(b_1,b_2,b_3,b_4), \ldots,  
\phi(b_{2m-3},
...,
%b_{2m-2},b_{2m-1},
b_{2m}), 
 b_{2m+1},\ldots,b_{2m+n}\big).
\end{IEEEeqnarray*}
Note that $\bar z$ is not necessarily the required $\bar c$.
However, we can state the following.
\begin{itemize}
 \item[(i)] \emph{If $\bar b$ differs from $ \bar x$ in one of the last $n$ coordinates, then $ \bar z$ and $ \bar y$ differ in exactly one, the same as $\bar b$ and $\bar x$,  coordinate; so, $\bar c= \bar z$ in this case.} 
 Indeed, $\bar z$ and $\bar y$ trivially coincide in the other coordinates.
 \item[(ii)] \emph{If $\bar b$ differs from $\bar x$ in one of the first $2m$ coordinates,
 say,  
 $(b_{4i-3},b_{4i-2},b_{4i-1},b_{4i})\ne (x_{4i-3},x_{4i-2},x_{4i-1},x_{4i})$, then there is $(c_{4i-3},c_{4i-2},c_{4i-1},c_{4i}) \in {\mathbb Z}_4^4$ in the same coset of $D''$ as $(z_{4i-3},z_{4i-2},z_{4i-1},z_{4i})$ such that
 \begin{multline*} \bar c=(z_1,\ldots,z_{4i-4}, c_{4i-3},c_{4i-2},c_{4i-1},c_{4i},\\
 z_{4i+1},\ldots,z_{2m+n})
 \end{multline*}
 at distance $1$ from $\bar y$. 
 Moreover, $ \bar c$ is a codeword of $\mathcal C$.
 } Indeed, the first part of the claim is straightforward from Lemma~\ref{l:cde} and the definition of the map $\phi$. From Lemma~\ref{l:e} and the construction of $\mathcal C$, we have $ \bar c \in \mathcal C$.
\end{itemize}

 In any case, there is a codeword 
 $\bar c \in \mathcal C$ 
 at distance at most $1$ from $\bar y$. From standard counting arguments (the size of the space equals the size of the code multiplied by the size of a radius-$1$ ball), we see that such a codeword is unique. Therefore, the code is $1$-perfect.
\end{proof}

So, if there is a $1$-perfect code in a $4$-ary Hamming graph, then there is a $1$-perfect code in every Doob graph of the same diameter.

\begin{corollary}
The Doob graph $D(m, n)$ has a non-trivial $e$-perfect code if and only if $e = 1$ and
there is a positive integer $k$ such that 
$2m + n = (4^k-1)/3$.
\end{corollary}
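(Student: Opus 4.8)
The plan is to deduce the Corollary from Theorem~\ref{th:main} together with the known necessary conditions collected in the introduction. The statement is an ``if and only if'', so I would split the argument into the necessity and the sufficiency directions, and then handle the trivial-code caveat.

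For necessity, I would invoke the result of \cite{KoolMun:2000}: any nontrivial $e$-perfect code in $D(m,n)$ forces $e=1$ and $2m+n=(4^k-1)/3$ for some integer $k$. The underlying reason is the standard sphere-packing/divisibility condition together with the diameter constraint — a perfect code of radius $e$ in a distance-regular graph with the intersection array of $H(2m+n,4)$ must have $e$ small, and the size $1+3(2m+n)$ of a radius-$1$ ball must divide $4^{2m+n}$, which is a power of $2$; since $1+3(2m+n)$ is odd, it must equal $1$, forcing... more precisely $6m+3n+1=4^k$, i.e. $2m+n=(4^k-1)/3$. For $e\ge 2$ one appeals to the nonexistence arguments already carried out in \cite{KoolMun:2000}. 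So this direction is essentially a citation.

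For sufficiency, suppose $2m+n=(4^k-1)/3$ with $k$ a positive integer, and that the code is to be nontrivial, so $m+n\ge 1$ and we may take $m,n$ as in the hypothesis of Theorem~\ref{th:main} (positive integers; the boundary cases $m=0$ or $n=0$ are covered by the classical Hamming code in $H(n,4)$ and by the analogous pure-Shrikhande construction, or can be reduced to the theorem by a trivial inspection). Then I must exhibit the bijections $\phi$ and $\psi$ required by the theorem: a bijection $\phi\colon H(4,4)\to D(2,0)$ respecting the coset structures of $(E'',E')$ and $(D'',D')$, and a bijection $\psi\colon H(4,4)\to D(1,2)$ respecting $(E'',E')$ and $(C'',C')$. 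By Lemma~\ref{lemma:d4} all six codes have the matching parameters ($|{\cdot}''|=16$, $|{\cdot}'|=64$, and the chain $'' \subset '$), so the quotient posets of cosets are isomorphic: $E'$ splits into $4^{k-2}$-independent... into $4^{2m+n-4}/64$ ambient cosets, but locally $E'/E''$ has $4$ elements and the whole space modulo $E''$ has $16$, and identically for $D$ and $C$; hence one can choose $\phi$ by first matching the $16$ cosets of $E''$ to the $16$ cosets of $D''$ so that the $4$-element blocks given by $E'$ go to the $4$-element blocks given by $D'$, then matching elements inside each coset arbitrarily. The same for $\psi$. Once $\phi,\psi$ exist, Theorem~\ref{th:main} directly produces a $1$-perfect code in $D(m,n)$.

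The main obstacle is really just the bookkeeping of the quotient structures and confirming that a compatible bijection exists; this is where a reader might worry, but it is a finite check: the lattice of cosets of $E''$ inside $H(4,4)$, with the distinguished sublattice coming from $E'$, is abstractly the same as for $D''\subset D'$ in $D(2,0)$ and for $C''\subset C'$ in $D(1,2)$ — in each case $16$ cosets grouped into $4$ superblocks of size $4$ — so the required bijections are obtained by matching superblocks and then matching within them. I would also remark, as the text does after the theorem, that the content is exactly the implication ``a $1$-perfect code in $H(2m+n,4)$ exists $\Rightarrow$ one in $D(m,n)$ exists'', and the existence of the former for $2m+n=(4^k-1)/3$ is the classical Hamming code construction recalled in Section~2.
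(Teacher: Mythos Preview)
Your approach is the same as the paper's: cite \cite[Theorem~3]{KoolMun:2000} for necessity and invoke Theorem~\ref{th:main} for sufficiency. The paper's proof is literally those two citations; you usefully add the verification that bijections $\phi,\psi$ satisfying (a)--(d) actually exist, which the paper leaves implicit but which is indeed immediate from the matching coset structures in Lemma~\ref{lemma:d4}.

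One slip in your heuristic aside on necessity: the ball size $1+3(2m+n)$ is \emph{not} odd in general (its parity is that of $n+1$), and in the relevant cases it equals $4^k$, which is even. The correct divisibility observation is simply that any divisor of $4^{2m+n}$ is a power of $2$, so $1+3(2m+n)$ must be a power of $2$; that it is in fact a power of $4$ (and the exclusion of $e\ge 2$) requires the further analysis in \cite{KoolMun:2000}. Since you explicitly defer to that citation for the formal argument, this does not break your proof, but the parenthetical reasoning as written is wrong and should be removed or corrected.
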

\begin{proof}
The ``if'' and ``only if'' parts of the statement come from Theorem~\ref{th:main}
and \cite[Theorem~3]{KoolMun:2000}, respectively.
\end{proof}

\section{Concluding remarks}

In this section we briefly discuss some related questions,
including those suggested by the reviewers. 

\begin{remark}[Unrestricted $1$-perfect codes vs additive  $1$-perfect codes]
For every Doob graph $D(m,n)$ that satisfies 
the obvious ball-packing necessary condition 
on the existence of $1$-perfect codes, 
we can construct such a code 
by Theorem~\ref{th:main}. 
In general, the code constructed 
is not linear or even additive 
(closed with respect to addition). 
Moreover, as was shown in
\cite[Theorem~1]{Kro:perfect-doob}, 
the existence of additive $1$-perfect codes 
implies additional conditions 
on the parameters $m$ and $n$. 
Namely, \vspace{-5mm}
\begin{IEEEeqnarray}{rCl} \nonumber
2m+n&=&(2^{\Gamma+2\Delta}-1)/3,\\ \label{eq:additive}
3n&=&2^{\Gamma+\Delta}-1-2n'', \\ \nonumber
1 &\ne& n''\le 2^\Delta -1
\end{IEEEeqnarray}
for some nonnegative integer 
$\Gamma$, $\Delta$, $n''$. 
Examples of Doob graphs 
for which additive $1$-perfect codes do not exist,
while unrestricted $1$-perfect codes
can be constructed by Theorem~\ref{th:main}, are
$D(6,9)$, $D(9,3)$, $D(10,1)$.
As can be seen from the proof of the theorem, 
we do not need additivity 
to have a good decoding algorithm. 
Indeed, decoding the constructed code 
in the Doob graph 
is not more complicate 
than decoding the original 
$4$-ary Hamming code of length $2m+n$; 
all additional operations 
(mainly, applying $\phi$ and $\phi^{-1}$)
take $o(2m+n)$ time.
\end{remark}

\begin{remark}[dual codes]
The codes formally dual to the $1$-perfect codes are known as {simplex codes}
or tight $2$-designs (the formal duality of two codes means that the MacWilliams transform of the distance distribution of one code gives the distance distribution of the other code).
A \emph{simplex code} in a Hamming graph $H(N,q)$
or a Doob graph $D(m,n)$ has 
$N(q-1)+1$ codewords at mutual distance 
$(N(q-1)+1)/q$ from each other
(for $D(m,n)$, we put $N=2m+n$ and $q=4$).
In every $D(m,n)$ such that $2m+n=(4^k-1)/3$ for some $k$, 
simplex codes
were constructed in \cite{KoolMun:2000}. 
So, it is safe to say that for every $1$-perfect code $C$ in a Doob graph
there is a simplex code (tight $2$-design) that is formally dual to $C$.
However, to treat duality in the usual sense, 
as a duality between two submodules,
we need additive codes 
(note that the duality should be defined in a special way to be compatible with the MacWilliams transform, see~\cite{Krotov:ISIT2019:Doob}).
Since additive $1$-perfect codes (and hence, additive simplex codes)
exist only if the parameters satisfy the additional conditions \eqref{eq:additive},
for any other parameters meeting $2m+n=(4^k-1)/3$ 
the class of $1$-perfect codes is connected with the class of simplex codes
only in the way of formal duality (there is still a challenge in finding 
a more strict connection, as was done, for example, in \cite{HammonsOth:Z4_linearity} for the formally dual 
classes of Preparata-like codes and Kerdock-like codes).
It should also be noted that the parameters of simplex codes (tight $2$-designs)
are not proven to be restricted by the case $2m+n=(4^k-1)/3$ 
only. The problem of existence of simplex codes for other parameters 
is open for Doob graphs, as well as for Hamming graphs,
where the most attempts are focused on the binary case (see the Hadamard conjecture).
\end{remark}

\begin{remark}[from $H(2m+n,4)$ to  $D(m,n)$]
To solve the problem of parameters of perfect codes in Doob graphs, 
we apply the strategy of ``switching'' between
the graphs $D(m,n)$ and $H(2m+n,4)$
by transforming the Shrikhande components of the Cartesian product, in groups of one or two, 
to Hamming ($K_4$) components.
This general approach is not new and 
was applied for different purposes before;
however, the realizations depend on concrete problems.
In \cite{KoolMun:2000}, 
isometries between special vertex sets in 
$H(4,4)$, $D(1,2)$, and $D(2,0)$ were used to construct 
a simplex code in a Doob graph from a special (not arbitrary) simplex code in a Hamming graph. 
In \cite{Kro:2015:N-MDS-Doob}, any maximum independent set in a Doob graph is mapped to a maximum independent set (unrestricted distance-$2$ MDS code) in the corresponding Hamming graph;
again, the map is constructed based 
on a map in the smallest case, from $D(1,0)$ to $H(2,4)$,
but the map is set-to-set and cannot be treated in 
a point-wise manner. 
In the current paper, we use point-to-point maps $\phi$ and $\psi$ that preserve
some metrical properties 
of a special coset partition in $H(4,4)$
to construct perfect codes in Doob graphs 
from a special Hamming code 
(we cannot apply the same maps 
to an arbitrary $1$-perfect code
or even to an equivalent Hamming code 
because we need to control 
which subcodes occur 
in subsequent groups of coordinates).
As shown in \cite{Kubota:2017}, 
one can obtain $D(m,n)$ from $H(2m+n,4)$ 
by a sequence of $m$ Godsil--McKay switchings,
each switching replacing one Shrikhande component 
in the Cartesian product by two $K_4$-components.
Godsil--McKay switching can be treated as a bigective map
between the vertex set of two graphs, and it is also 
induces, in an algebraic way, an isomorphism between
eigenspaces of the graphs with the same eigenvalues
(so the graphs related by Godsil--McKay switching are always cospectral). However, it obviously changes the distances 
between some vertices, 
and so cannot serve the purpose of constructing error-correcting codes
in a straightforward way.
It is still a very interesting question
if some bijections $\phi$ and $\psi$ with the desired 
properties (Section~\ref{ss:main}) can be treated as 
a combination of Godsil--McKay switchings, but even if they can,
this would hardly simplify the construction or its proof.
\end{remark}
%\nocite{Kro:2015:N-MDS-Doob}
%\nocite{Kro:perfect-doob}
%\nocite{BesKro:MDS,KroBes:MDS2}

%   \bibliographystyle{IEEEtranS}
%   \bibliography{../../../k}
%  
%  \end{document}

\renewcommand\refname{}

\vspace{-0.99cm}
 
 \providecommand\href[2]{#2} \providecommand\url[1]{\href{#1}{#1}}
  \def\DOI#1{{\small {DOI}:
  \href{http://dx.doi.org/#1}{#1}}}\def\DOIURL#1#2{{\small{DOI}:
  \href{http://dx.doi.org/#2}{#1}}}

 \end{document}